\newtheorem{thm}{Theorem}
\newtheorem{defo}[thm]{Definition}
\newtheorem{lem}[thm]{Lemma}
\newcommand{\anc}{\texttt{anc}}
\newcommand{\bb}{\mathbf{B}}
\newcommand{\bra}[1]{\langle #1|}
\newcommand{\braket}[2]{\langle #1|#2\rangle}
\newcommand{\cA}{{\cal A}}
\newcommand{\cB}{{\cal B}}
\newcommand{\cH}{{\cal H}}
\newcommand{\cM}{{\cal M}}
\newcommand{\cN}{{\cal N}}
\newcommand{\da}{^\dagger}
\newcommand{\ket}[1]{|#1\rangle}
\newcommand{\ketbra}[1]{\ket{#1}\bra{#1}}
\newcommand{\lm}{\lambda_{\max}}
\newcommand{\om}{\omega}
\newcommand{\sep}[1][]{{\texttt{SEP}}_{#1}}
\newcommand{\Tr}[1][]{{\rm Tr}_{#1}}
\newcommand{\1}{\mathbb{1}}
\renewcommand{\leq}{\leqslant}
\renewcommand{\geq}{\geqslant}
\crefname{algorithm}{protocol}{protocols}
\Crefname{algorithm}{Protocol}{Protocols}
\begin{document}

\title{Honest-binding quantum bit commitment from separable operations}

\author{Ziad Chaoui, Anna Pappa}
\affiliation{Technische Universität Berlin, Germany}
\author{Matteo Rosati}
\affiliation{Università degli Studi Roma Tre, Italy}

\date{\today}

\begin{abstract}
Bit commitment is a fundamental cryptographic primitive and a cornerstone for numerous two-party cryptographic protocols, including zero-knowledge proofs. However, it has been proven that unconditionally secure bit commitment, both classical and quantum, is impossible. In this work, we demonstrate that imposing a restriction on the committing party to perform only separable operations enables secure quantum bit commitment schemes. Specifically, we prove that in any perfectly hiding bit commitment protocol, an honestly-committing party limited to separable operations will be detected with high probability if they attempt to alter their commitment. To illustrate our findings, we present an example protocol.
\end{abstract}

\keywords{quantum information, quantum cryptography, quantum bit commitment}

\maketitle

\section{Introduction}
Bit commitment is a cryptographic primitive between two mistrustful parties that allows the committer (Alice) to commit to a bit and reveal it to another party (Bob) at a later stage. More specifically, a bit commitment protocol consists of two phases. In the commit phase, Alice secretly chooses a bit $b\in\{0,1\}$ and sends Bob proof of her commitment without revealing the bit. In the opening phase, Alice announces her bit and Bob verifies that it is indeed the bit she had committed to. Informally, it is required that a bit commitment protocol is \emph{binding} (i.e. Alice cannot change her bit after the commit phase) and at the same time \emph{hiding} (i.e. Bob does not learn the bit before the opening phase).

Quantum bit commitment is a fundamental primitive in quantum cryptography. Indeed, it can be used to build quantum oblivious transfer \cite{brassard_quantum_1993}, and consequently any 2-party computation. However, Mayers \cite{mayers_unconditionally_1997} and independently Lo and Chau \cite{lo_why_1998} proved that unconditionally secure quantum bit commitment is not possible, thus bringing into question the conditions under which it can actually be realised. A lot of research has therefore been focused on restricting the power of the adversaries involved in the quantum bit commitment protocol. The most common examples include putting bounds on their quantum memory, both in size \cite{bounded} and quality \cite{koenig_unconditional_2012}, and also considering (relativistic) constraints on the location of the parties \cite{kaniewski_secure_2013}. 
{ {
Relatively unexplored are models where restrictions are placed on the operations accessible to the parties, motivated by realistic hardware limitations. One such model was proposed in \cite{salvail_quantum_1998}, where the adversary, i.e. a dishonest Alice, has access to a large quantum memory, but cannot act upon its entirety in a coherent fashion. More precisely, Alice has $n$ qubits, but can only act on up to $k\leqslant n$ of them at a time. These operations can be modelled as local operations and classical communication (LOCC) channels. LOCC channels are indeed a natural way to model realistic hardware assumptions; they precisely allow us to represent that in practice quantum devices rarely allow large, joint coherent operations across many qubits, as is the case for example in modular (see \Cref{fig:modular}) and distributed quantum computing setups  \cite{carreavazquez24,Main2025}.
However, the security proof of the proposed protocol in \cite{salvail_quantum_1998} was erroneous \cite{funder}, and therefore the question remained whether secure bit commitment protocols exist in such models.
\begin{figure}[!t]
    \includegraphics[scale=0.1]{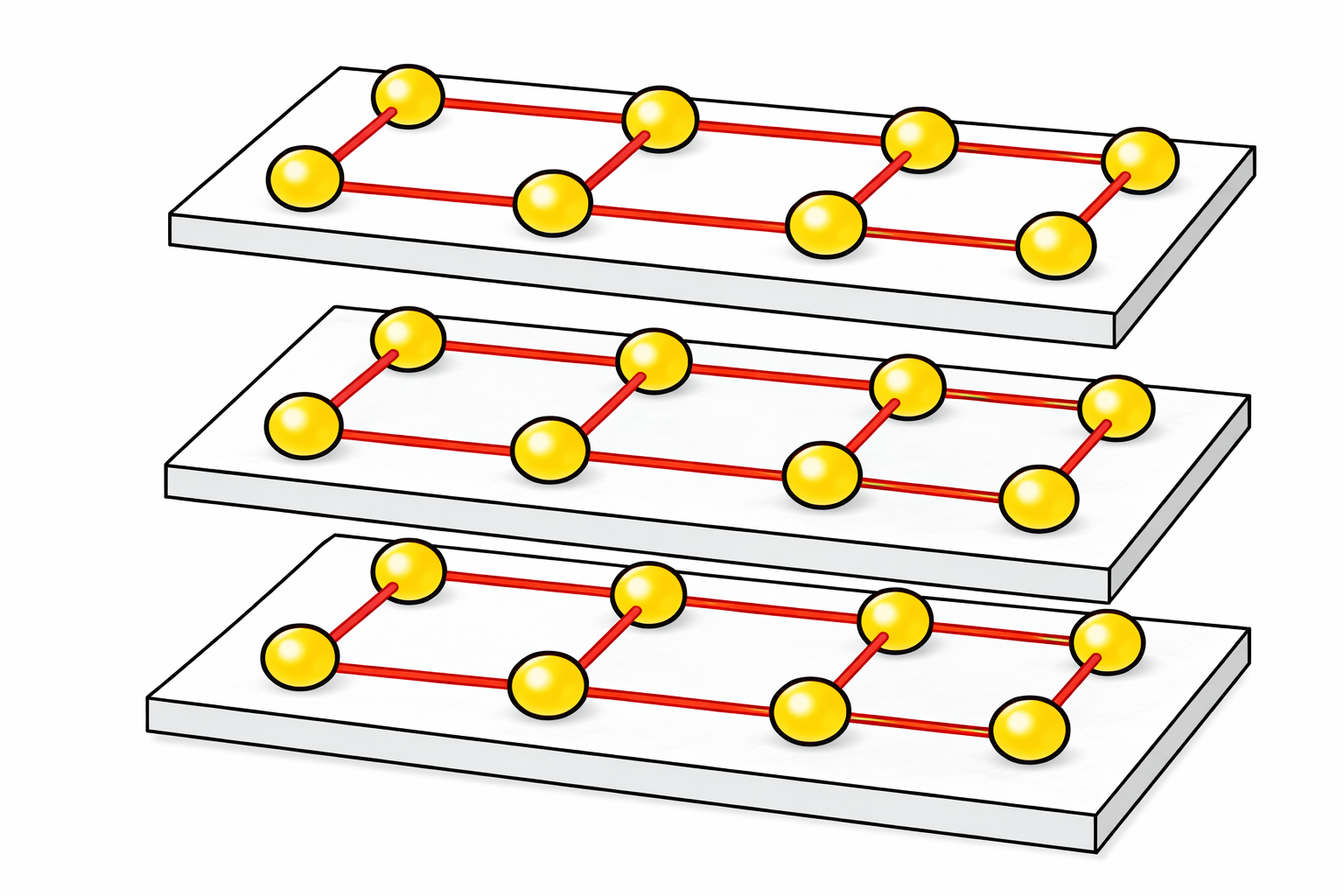}
    \caption{\textbf{Schematic representation of a modular quantum processor}. The system is composed of multiple quantum chips arranged in a stack, where coherent measurement operations are restricted to individual chips rather than across the full device.}
    \label{fig:modular}
\end{figure}

In this work, we present a set of conditions that, if fulfilled after the commit phase, allows for secure quantum bit commitment under the restriction that the committing party (Alice) can only perform separable operations on her quantum registers. This restriction includes the restriction proposed in \cite{salvail_quantum_1998}, since  LOCC channels are a subset of the more general separable channels \cite{chitambar_locc}. }}Further, we propose a protocol in which said conditions are satisfied, and show that Alice's attempts to switch commitment after the commit phase will be detected with high probability. 

We first define what it means for a bit commitment scheme to be secure. We then study the properties of the shared states following Alice's commitment, focusing on scenarios where her reduced state is either a product state or an absolutely maximally entangled state, depending on whether she committed to $0$ or $1$, respectively. In these settings, we prove that restricting her actions to separable operations leads to security. Finally, we present a bit commitment protocol where Alice has a 3-qudit register and can only operate on two of these qudits. We prove that she is not able to change her commitment except with probability $\nicefrac{1}{d}$, where $d$ is the dimension of the qudits used to communicate.

\section{Security definitions}
As previously mentioned, a bit commitment protocol is desired to be at the same time hiding and binding. The definition of the first property is quite straightforward.
{ 
\begin{defo}[$\mu$-hiding]\label{def:hiding}
    A quantum bit commitment protocol is \emph{$\mu$-hiding}, with $\mu\geq 0$, if Bob can guess the committed bit before the opening phase only with probability $\nicefrac{1}{2}+\mu$. It is said to be \emph{perfectly hiding} if $\mu=0$, i.e.~Bob can only hazard a random guess.
\end{defo} 
}
On the other hand, there are different ways to define what it means for a protocol to be binding. There is a composable definition given in \cite{schaffner_boundedstorage}, while the most common definition was first presented in \cite{goos_perfectly_2000}. In the latter, the authors define a bit commitment protocol to be binding if Alice cannot significantly increase the total probability of successfully opening both $0$ and $1$ after the commit phase. In \cite{schaffner_boundedstorage} this is referred to as \emph{weak-binding} and authors in \cite{agrawal_general_2022} and \cite{fang_et_al:LIPIcs.ISAAC.2022.26} call this property \emph{sum-binding}, while also defining the notion of  \emph{honest-binding}. Informally, a protocol is honest-binding if Alice cannot successfully open the bit $1-b$ after she honestly committed to the bit $b$. We formally define sum-binding and honest-binding as follows:
{ 
\begin{defo}[$\delta$-sum-binding]
    A bit commitment protocol is \emph{$\delta$-sum-binding} if for any strategy that Alice follows, the probabilities $p_b$ of successfully opening the bit $b\in\{0,1\}$ satisfy
    \begin{equation*}
        p_0+p_1\leq 1+\delta.
    \end{equation*}
\end{defo}
}
\begin{defo}[$\varepsilon$-honest-binding]\label{def:honest_binding}
    In a bit commitment protocol, for $b\in\{0,1\}$, let $p_b$ be the probability that Alice successfully opens the bit $1-b$ after having honestly committed the bit $b$. We say the protocol is \emph{$\varepsilon$-honest-binding} if:
    \begin{equation*}
        \max\{p_0,p_1\}\leq \varepsilon.
    \end{equation*}
\end{defo}

Although sum-binding implies honest-binding it is still interesting to study the latter because it is precisely the honest-binding property that fails in the unconditional setting, leading to the no-go result \cite{mayers_unconditionally_1997,lo_why_1998}. Gaining a better understanding of how to evade this no-go result, can help design protocols that are conditionally binding and hiding. In the following, we show that restricting Alice to separable operations can guarantee honest-binding, without comprising perfect hiding. We first take a closer look at properties of Alice and Bob's shared state after Alice committed honestly.

\section{Binding from separable operations}
\subsection{Post-commitment states}
Let $\cA$ be the Hilbert space of Alice and $\cB$ be the Hilbert space of Bob, associated with an $n$-qudit and an $m$-qudit quantum register respectively. We set $N \coloneqq \dim(\cA)=d^n$ and $M\coloneqq \dim(\cB)=d^m$, where $d\geq 2$ is the dimension of the qudits.
After the commit phase, Alice and Bob share one of two quantum states $\Psi_0,\Psi_1\in\cA\otimes\cB$ depending on  Alice's commitment to $0$ or $1$ respectively. We assume the states to be pure~\footnote{If the states are not pure we can purify them over a larger system that will be Bob's, since the size of his system does not matter.}, and we write $\Psi_b=\ketbra{\Psi_b}$ for $b\in\{0,1\}$. For the protocol to be hiding, the reduced states on Bob's side have to be indistinguishable~\footnote{This condition is necessary, but not sufficient. The analysis of the hiding property is protocol-specific as is the case for our protocol in the next section.}. This means that:
\begin{equation}\label{eq:equiv_purifications}
    \Tr[\cA](\ketbra{\Psi_0})=\Tr[\cA](\ketbra{\Psi_1}).
\end{equation}

We study the situation where Alice and Bob share the post-commitment state $\Psi_b$ and Alice wants to open $1-b$, meaning she wants to convince Bob that they are sharing the state $\Psi_{1-b}$. We denote with $p_b$ the probability of her successfully doing so, i.e., Bob accepting her opening.
To this end Alice performs an operation on her space $\cA$, described by $\cN\in C(\cA)$, where $C(\cA)$ is the set of completely positive trace-preserving maps, i.e. quantum channels, from the space of linear operators on $\cA$ to itself.
Her goal is to maximise $p_b$, the probability that Bob identifies the shared state as $\Psi_{1-b}$, after she applied her channel to the committed state $\Psi_b$; 
this is expressed by the fidelity between the two states \footnote{For two quantum states $\rho$ and $\sigma$, we define the fidelity as
$
F(\rho,\sigma):=\|\sqrt{\rho}\sqrt{\sigma}\|_1^2=\left(\Tr(\sqrt{\sqrt{\sigma}\rho\sqrt{\sigma }})\right)^2
$}:
\begin{equation*}
p_b= \max_{\cN\in C(\cA)}F(\ketbra{\Psi_{1-b}},(\cN\otimes\1_\cB)(\ketbra{\Psi_b})).
\end{equation*}
We fix a channel $\cN$ and denote by $\{K_j\}$ its set of Kraus operators, i.e.~$\cN(X)=\sum_jK_jXK_j\da$. Since $\Psi_{1-b}$ is pure, it follows that:
\begin{equation}\label{eq:fidelity_overlapsum}
     F(\ketbra{\Psi_{1-b}},(\cN\otimes\1_\cB)(\ketbra{\Psi_b}))= \sum_j|\bra{\Psi_{1-b}}(K_j\otimes\1_\cB)\ket{\Psi_b}|^2.
\end{equation}
We now take a closer look at the states $\Psi_0$ and $\Psi_1$. From \eqref{eq:equiv_purifications}, it follows that there exist orthonormal sets $\{\ket{x_i}\}_{i=0}^{N-1}$, $\{\ket{y_i}\}_{i=0}^{N-1}\subset \cA$ and  $\{\ket{b_i}\}_{i=0}^{M-1}\subset\cB$, as well as $\lambda_i\geq 0$, so that we can write the Schmidt decompositions \cite{nielsen_quantum_2012} of $\ket{\Psi_0}$ and $\ket{\Psi_1}$ as
\begin{align}
    \ket{\Psi_0}&=\sum_{i=0}^{M-1}\sqrt{\lambda_i}\ket{x_i}\ket{b_i},\label{eq:schmidt_0}\\
    \ket{\Psi_1}&=\sum_{i=0}^{M-1}\sqrt{\lambda_i}\ket{y_i}\ket{b_i}\label{eq:schmidt_1}.
\end{align}
We can now compute (see appendix \ref{apx:gen} for details)
\begin{align}
    p_0&= \sum_j\left|\sum_{i} \lambda_i\bra{y_i}K_j\ket{x_i}\right|^2\label{eq:switch_schmidt_0}\\
    p_1&=\sum_j\left|\sum_{i} \lambda_i\bra{x_i}K_j\ket{y_i}\right|^2\label{eq:switch_schmidt_1}.
\end{align}

By the unitary equivalence of purifications, there exists a unitary operator $U$  such that $U\ket{x_i}=\ket{y_i}$, for every $i$. Therefore, if Alice's operations are unrestricted, she can choose the unitary channel with the single Kraus operator $U$ and perfectly cheat without being detected; this is exactly the famous no-go result \cite{mayers_unconditionally_1997,lo_why_1998}. We can however avert this by restricting Alice to performing only separable channels, as we show in the next section. 

\subsection{Binding from separable operations}
 A separable channel is defined with respect to a bipartition of Alice's space. A separable channel $\cN\in C(\cA)$ with respect to the bipartition $\cA=\cA_1\otimes\cA_2$, is a convex combination of product maps $\cN_{\cA_1}\otimes\cN_{\cA_2}$, where the $\cN_{\cA_i}$ are completely  positive maps on $\cA_i$ and the convex combination is also trace preserving. Let $\cA=\cA_1\otimes\cA_2$ be such a  bipartition and denote $N_1\coloneq\dim(\cA_1)$ and $N_2\coloneq\dim(\cA_2)$ (without loss of generality, we assume that $N_2\leq N_1$).   $\sep(\cA_1,\cA_2)$ is the set of separable channels with respect to this bipartition. An important feature of separable channels is that their Kraus operators are also separable with respect to the same partition. For a separable channel $\cN\in \sep(\cA_1,\cA_2)$ we can write its Kraus operators as $K_j=K_{j1}\otimes K_{j2}$.

In line with the notation chosen for the Schmidt decomposition \eqref{eq:schmidt_0} and \eqref{eq:schmidt_1}, we investigate the situation where $\ket{x_i}$ are maximally entangled orthonormal states and $\ket{y_i}$ orthonormal product states (so that Alice's reduced state is separable). 
This means that there exist unit vectors $\ket{v_i}\in\cA_1$ and $\ket{w_i}\in\cA_2$, such that: \begin{equation}\label{eq:product_state}
\ket{y_i}=\ket{v_i}\ket{w_i}.
\end{equation}
We can write
    $K_j\ket{y_i}=(K_{j1}\otimes K_{j2})(\ket{v_i}\ket{w_i})=\ket{v_{ji}}\ket{w_{ji}}$, 
where $\ket{v_{ji}}\in\cA_1$ and $\ket{w_{ji}}\in\cA_2$ are of norm less than $1$.

In contrast, the $\ket{x_i}$ are maximally entangled with respect to the bipartition $\cA_1\otimes\cA_2$. This implies that, for every $i$, there are bases $\{\ket{e_{ik}}\}_{k=0}^{N_1-1}\subset\cA_1$ and $\{\ket{f_{ik}}\}_{k=0}^{N_2-1}\subset\cA_2$, such that:
\begin{equation}\label{eq:entangled_states}
    \ket{x_i}=\frac{1}{\sqrt{N_2}}\sum_{k=0}^{N_2-1}\ket{e_{ik}}\ket{f_{ik}}.
\end{equation}
With this we find the following upper bounds for $p_0$ and $p_1$ (see appendix \ref{apx:gen} for detailed calculations)
\begin{align}
    p_0&\leq \lm^2 N_2, \label{eq:alice_switch_bound_0}\\
    p_1&\leq \frac{1}{N_2} \label{eq:alice_switch_bound_1},
\end{align}
 which depend on the maximum eigenvalue of $\Psi_0$ and on $N_2$, the dimension of the \emph{smaller} of the two bipartitions of Alice's space.
 
If Bob can now choose $M=N_2$ and $\lambda_i=\nicefrac{1}{N_2}$ for every $i$, then the states in \eqref{eq:schmidt_0} and \eqref{eq:schmidt_1} are both uniform superpositions. It follows from \eqref{eq:alice_switch_bound_0} and \eqref{eq:alice_switch_bound_1} that Alice's cheating probability is bounded by $\nicefrac{1}{N_2}$. We stress that this can be achieved under the motivated assumption that it is Bob who creates the states and therefore fixes $M$ and $\lm$; indeed, if we allowed Alice to create an entangled state while being restricted to separable operations, that would be conceptually a contradiction regarding her capabilities.
Furthermore, we note that Bob could also choose $M>N_2$ and reduce $p_0$ even more, but this would not affect Alice's overall cheating probability.

On the other hand, Alice can freely choose the particular bipartition $\cA_1\otimes\cA_2$ on which to act with separable operations, whereas \eqref{eq:product_state} and \eqref{eq:entangled_states} only guarantee that $\ket{x_i}$ is maximally entangled and $\ket{y_i}$ separable with respect to this particular bipartition of $\cA$. To ensure the binding property, we then need to consider states that fit the structures \eqref{eq:product_state} and \eqref{eq:entangled_states} with respect to \emph{any} bipartition of $\cA$.
We want the $\ket{y_i}$'s to be separable with respect to any bipartition of $\cA$, and will therefore consider the product states 
$\ket{y_i}=\ket{a_{i1}}\otimes\cdots\otimes\ket{a_{in}}$,
where, for every $j$, $\ket{a_{ij}}$ is a unit vector in the $d$-dimensional Hilbert space $\cH_d$.
As for the $\ket{x_i}$'s, the states that are maximally entangled with respect to any bipartition of $\cA$ are precisely the absolutely maximally entangled (AME) states:

\begin{defo}[Absolutely maximally entangled states \cite{helwig_absolute_2012}]\label{def:ame}

An $AME(n,d)$ state of $n$ qudits of dimension $d$,  is a pure state $\ket{\gamma}\in(\cH_d)^{\otimes n}=\cA$, such that for any bipartition of the system $\cA=\cA_1\otimes\cA_2$, with $N_2=\dim(\cA_2)\leq \dim(\cA_1)=N_1$, $\ket{\gamma}$ is maximally entangled, i.e.
$\Tr[\cA_1](\ketbra{\gamma})=\frac{1}{N_2}\1_{N_2}$.\\
\end{defo}
\vspace{-0.8cm}
\noindent We are now equipped to prove the following Theorem:
\begin{thm}\label{thm:main_result}
    We consider a bit commitment scheme between Alice and Bob, who hold an $n$-qudit register $\cA$ and an $m$-qudit register $\cB$ respectively, with $d$ the qudit dimension. If after the commit phase Alice and Bob either share the state $\Psi_0$ or $\Psi_1$ of the form \eqref{eq:schmidt_0} or \eqref{eq:schmidt_1} respectively, then the bit commitment protocol is $\nicefrac{1}{d}$-binding under the following two conditions:
    \begin{enumerate}
        \item Alice can only perform separable channels on $\cA$.
        \item In line with notation in \eqref{eq:schmidt_0} and \eqref{eq:schmidt_1} for every $i\in\{0,\dots, N-1\}$ it holds that $\ket{x_i}\in AME(n,d)$ and $\ket{y_i}$ is a separable state with respect to any bipartition of $\cA$\label{cond_reduced_state}.
    \end{enumerate}
    \label{thm_bin}
\end{thm}

\begin{proof}
The two conditions given in the theorem imply that the analysis presented so far holds for \emph{any} bipartition of Alice's space $\cA$. As discussed above Bob creates the states and therefore can freely choose $\lm$ and $M$ in the Schmidt decompositions \eqref{eq:schmidt_0} and \eqref{eq:schmidt_1}. From \eqref{eq:alice_switch_bound_0} and \eqref{eq:alice_switch_bound_1} it then follows that Alice's maximal cheating probability is 
    \begin{align*}
\max_{N_2}\min_{\lm}\max\left\{\frac{1}{N_2}, \lm^2 N_2\right\}.
\end{align*}
 Alice can maximise $\nicefrac{1}{N_2}$ by choosing the smallest possible $N_2$ which corresponds to her acting on a single qudit, obtaining $\max_{N_2}\nicefrac{1}{N_2}=\nicefrac{1}{d}$. If she wants to maximise $\lm^2 N_2$ she can pick $N_2$ to be as large as possible, i.e.~$\lfloor\nicefrac{N}{2}\rfloor$ (because $\cA_2$ is the smaller partition), but since Bob has control over $\lm$ he can choose it to be $\lfloor\nicefrac{N}{2}\rfloor^{-1}$. Alice acting on a partition of dimension $\lfloor\nicefrac{N}{2}\rfloor$ corresponds to her operating on $\lfloor\nicefrac{n}{2}\rfloor$ of her qudits, and we can conclude that her maximum cheating probability is then
 \begin{equation*}
    p_{switch}=\max\left\{\frac{1}{d},\frac{1}{d^{\lfloor\frac{n}{2}\rfloor}}\right\}=\frac{1}{d}.
\end{equation*}
\end{proof}

Therefore, we are left with the challenge of designing a quantum bit commitment protocol where, after commitment,  Alice's reduced state is either a product state or an AME state depending on her commitment. If there is a protocol where Alice's reduced state satisfies this condition regardless of her strategy up to the end of the commit phase, then our analysis proves that the protocol is sum-binding. If the protocol leads to the desired situation only if Alice is honest up to the commitment, then our analysis proves that the protocol is honest-binding. Crucially, in either case the protocol is perfectly hiding, since the post-commitment states satisfy \eqref{eq:schmidt_0} and \eqref{eq:schmidt_1}. In the following we present an example of an honest-binding protocol for any $d$ using $AME(3,d)$-states, where Alice has a 3-qudit register and is restricted to perform coherent operations on up to 2 of them.
\pagebreak
\section{AME(3,d) quantum bit commitment}\label{sec:ame3d}

We consider the following set up: Alice has a $3$-qudit register $\cA$, Bob a single-qudit register $\cB$,
and we also consider ancilla registers $\anc$. 
The Fourier gate for qudits is
$F=\frac{1}{\sqrt{d}}\sum_{k=0}^{d-1}\om^{kl}\ket{k}\bra{l}$, where $\om:=e^{\frac{2\pi i}{d}}$ is the primitive $d$-th root of unity; the gate transforms the generalized $Z$-eigenbasis $\mathbf{B}_1=\{\ket{k}\}_{k=0}^{d-1}$ to the generalized $X$-eigenbasis  $\mathbf{B}_0=\{\ket{\tilde k}\}_{k=0}^{d-1}$ , with
$\ket{\tilde k}=F\da\ket{k}=\frac{1}{\sqrt{d}}\sum_{l=0}^{d-1}\om^{-kl}\ket{l}$ \cite{nielsen_quantum_2012}. We can now state the protocol and prove its security.

\begin{algorithm}[H]
\algrenewcommand\algorithmicrequire{\textbf{Commit Phase}}
\algrenewcommand\algorithmicensure{\textbf{Opening Phase}}
\caption{AME(3,d) Quantum Bit Commitment}\label{ame3d-qbc}
\vspace{\baselineskip}
\begin{algorithmic}[1]
\State Bob creates the state \begin{equation}
        \ket{\Xi}=\frac{1}{\sqrt{d}}\sum_{l=0}^{d-1}\ket{l}_\anc\ket{\Phi_l}_{\cA\cB},
    \end{equation} 
    where
    \begin{equation}
    \ket{\Phi_l}=\frac{1}{\sqrt{d}}\sum_{j=0}^{d-1}\om^{jl}\ket{jjj}_\cA\ket{j+l}_\cB.
    \end{equation}\label{def:phi_l}
\Require
\State Bob sends Alice the registers $\anc$ and $\cA$.
\State Alice picks a permutation $\pi\in S_d$ and chooses $b\in\{0,1\}$ she wants to commit.
    \begin{enumerate}
        \item To commit $b=0$ Alice measures $\anc$ in the basis $\bb_{0,\pi}=\{\ket{\widetilde{\pi(k)}}\}_{k=0}^{d-1}$.
        \item To commit $b=1$ Alice measures $\anc$ in the basis $\bb_{1,\pi}=\{\ket{\pi(k)}\}_{k=0}^{d-1}$.
    \end{enumerate}
\State Alice obtains outcome $\ket{\pi(m)}$ or $\ket{\widetilde{\pi(m)}}$, and announces $m\in\{0,\dots,d-1\}$. The shared state dependent on $b$, $\pi$ and $m$ is $\ket{\Xi_{\pi,m}^b}$.
\Ensure
\State Alice announces $b$ and $\pi$ and sends Bob the register $\cA$.
\State Bob projects the full state onto the expected state  $\ketbra{\Xi_{\pi,m}^b}_{\cA\cB}$ to verify that Alice's announcement was truthful. 
\end{algorithmic}
\end{algorithm}

\begin{thm}\label{thm:bc-security}
    The bit commitment protocol \ref{ame3d-qbc} is perfectly hiding and $1/d$-honest-binding, if Alice is restricted to only perform separable operations.
\end{thm}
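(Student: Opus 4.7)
The plan is to split the proof along the two claims of the theorem, treating perfect hiding and $\nicefrac{1}{d}$-honest-binding separately, and to reduce the binding part entirely to Lemma \ref{lemma_honbin}.

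For perfect hiding, I would compute Bob's reduced state $\Tr[\cA](\Psi_b)$ for $b\in\{0,1\}$ directly from the protocol prescription, and check that the two density matrices coincide. Given that the protocol uses the mutually unbiased bases $\mathbf{B}_0$ and $\mathbf{B}_1$ (related by the Fourier gate $F$) and Bob keeps only a single qudit, I expect both reductions to collapse to the maximally mixed state $\frac{1}{d}\1_d$, so that Bob's optimal guess about the committed bit before opening is no better than $\nicefrac{1}{2}$; this is condition \eqref{eq:equiv_purifications} specialized to the present registers.

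For honest-binding, I would verify the two hypotheses of Lemma \ref{lemma_honbin} for $n=3$. Condition 1 (Alice acts by separable channels) is given by assumption. For condition 2, I need to argue from the protocol description that $\Tr[\cB](\Psi_0)$ is an $AME(3,d)$ state on Alice's three qudits and that $\Tr[\cB](\Psi_1)$ is a tensor product of single-qudit pure states (and therefore separable across every bipartition of $\cA$). The product structure in the $b=1$ case should be immediate, as the protocol presumably prepares Alice's register in a computational-basis-type product state when she commits to $1$. The AME structure in the $b=0$ case is where the work lies: after tracing out Bob's qudit, one must check that every single-qudit marginal of Alice's state is $\frac{1}{d}\1_d$, which by Definition \ref{def:ame} (together with the fact that for three qudits, purity of the global state plus maximal mixedness of every one-qudit marginal is equivalent to the AME property) guarantees $\Tr[\cB](\Psi_0)\in AME(3,d)$. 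Once both conditions hold, Lemma \ref{lemma_honbin} directly yields $p_s(0),p_s(1)\leq \nicefrac{1}{d}$, i.e.~$\nicefrac{1}{d}$-honest-binding.

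The main obstacle I anticipate is the verification of the AME property of $\Tr[\cB](\Psi_0)$ for every $d$. Since AME(3,d) states exist for all $d\geq 2$, the protocol must be designed so that the honest commit-phase interaction effectively distills such a state onto Alice's three qudits; checking this cleanly will likely involve tracking the Schmidt decomposition of the joint state through Bob's preparation, Alice's honest measurements/operations, and the communication steps, ensuring that Bob is the one who fixes $M$ and $\lambda_i=\nicefrac{1}{N_2}$ (as emphasized after Lemma \ref{lemma_honbin}) so that Alice cannot bias the Schmidt weights in her favour. Once the AME structure is established, the final bound $p_{\mathrm{switch}}=\nicefrac{1}{d}$ follows with no additional work from Lemma \ref{lemma_honbin}.
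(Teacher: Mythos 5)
Your binding argument follows the paper's route: verify the two hypotheses of Lemma \ref{lemma_honbin} for the post-commitment states and invoke the lemma. Two caveats there. First, what actually has to be checked is not a property of the mixed operator $\Tr[\cB](\Psi_0)$ but of the Schmidt vectors: one writes $\ket{\Xi^0_{\pi,m}}=\frac{1}{\sqrt d}\sum_j\ket{x^{m,\pi}_j}\ket{j}$ and $\ket{\Xi^1_{\pi,m}}=\frac{1}{\sqrt d}\sum_j\ket{y^{m,\pi}_j}\ket{j}$ and shows that each $\ket{x^{m,\pi}_j}$ is individually an $AME(3,d)$ state and each $\ket{y^{m,\pi}_j}$ a product state. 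Your proposed test --- that every single-qudit marginal of $\Tr[\cB](\Psi_0)$ be $\frac{1}{d}\1_d$ --- is not sufficient, since $\Tr[\cB](\Psi_0)=\frac{1}{d}\sum_j\ketbra{x^{m,\pi}_j}$ is mixed and a mixed state with maximally mixed marginals (e.g.\ $\frac{1}{d^3}\1$) need not be an ensemble of AME states; the ``purity plus maximally mixed marginals'' criterion applies to each Schmidt vector, not to the reduced state.

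The genuine gap is in the hiding part. You propose to check $\Tr[\cA](\Psi_0)=\Tr[\cA](\Psi_1)=\frac{1}{d}\1$, i.e.\ condition \eqref{eq:equiv_purifications}. As the paper itself notes, this condition is necessary but not sufficient: it only covers a Bob who honestly prepares $\ket{\Xi}$ in Step 1. A cheating Bob can send an arbitrary state, possibly entangled with a register he keeps, chosen to reveal which basis Alice measures $\anc$ in. Ruling this out is exactly where the random permutation $\pi$ enters: the paper shows that the two measurement channels averaged over $\pi$, namely $\cM_0=\frac{1}{d!}\sum_{\pi}\cM^\pi_0$ and $\cM_1=\frac{1}{d!}\sum_{\pi}\cM^\pi_1$, satisfy $\cM_0(\rho)=\cM_1(\rho)$ for \emph{every} input $\rho$, so no preparation by Bob can distinguish the two commitments. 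Without the permutation (or without averaging over it), the statement is false --- Bob could prepare a state on $\anc$ and a private ancilla that distinguishes a fixed $\mathbf{B}_0$-measurement from a fixed $\mathbf{B}_1$-measurement --- so this channel-equality argument is an essential ingredient that your plan omits entirely.
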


In the following we outline the main steps of the proof of Theorem \ref{thm:bc-security}. We refer the reader to appendix \ref{apx:sec} for details. 

We first show that the commitment does not reveal any information and is always accepted when both parties are being honest. 
If Alice commits to $1$, the post-measurement state given the permutation $\pi$ and the measurement outcome $m$ is
\begin{equation*}
\ket{\Xi_{\pi, m}^1}
    =\ket{\Phi_{\pi(m)}}.
\end{equation*}
Similarly, if Alice commits to $0$, the post-measurement state is 
\begin{equation*}
\ket{\Xi_{\pi,m}^0}
=\left(\frac{1}{\sqrt{d}}\sum_{l=0}^{d-1}\om^{\pi(m)l}\ket{\Phi_l}\right).
\end{equation*}
We assume that Alice decides randomly to commit to $0$ or $1$; this means that given her measurement outcome $m$, the shared state of Alice and Bob is equally likely to be $\ket{\Xi_{\pi, m}^0}$ or $\ket{\Xi_{\pi, m}^1}$. We need to guarantee that these states are indistinguishable to Bob. Indeed, it holds that:
\begin{equation*}
    \Tr[\cA](\ketbra{\Xi_{\pi, m}^0})=\Tr[\cA](\ketbra{\Xi_{\pi, m}^1})=\frac{\1}{d}.
\end{equation*} where $\1$ is the identity matrix.
Furthermore, since for a given $\pi\in S_d$ and $b\in\{0,1\}$, $\{\ketbra{\Xi_{\pi, j}^b}\}_{j=0}^{d-1}$ forms a basis, when Alice announces her parameters $m$, $\pi$ and $b$ in Step 5, Bob can measure in that basis and finally accept Alice's opening if the outcome is indeed $m$. 

We prove \cref{thm:bc-security} by separately showing that the protocol is $\nicefrac{1}{d}$-honest-binding and perfectly hiding. For the former we show that it fulfils the conditions of \cref{thm_bin} if Alice acts honestly in the commit phase, as for the hiding we show that even if Bob sends an arbitrary state of his choice, he cannot infer any information on the bit Alice committed after the commit phase.
{ 
\begin{lem}\label{lem:bc-hon-bin}
    If Alice acts honestly in the commit phase, then Alice and Bob share a state $\ket{\Xi_{\pi,m}^b}$ that satisfies the condition \ref{cond_reduced_state} of \cref{thm_bin}, making the protocol $\nicefrac{1}{d}$-honest-binding.  
\end{lem}
}
\begin{proof}
We show that after honest commitment, Alice’s reduced state is an AME state if she committed $0$ and a separable state if she committed $1$. Indeed, if Alice selects the permutation $\pi$ and commits to $1$, the post-measurement state given outcome $m$ is:
\begin{equation*}
\ket{\Xi_{\pi, m}^1}
 =\ket{\Phi_{\pi(m)}}=
\frac{1}{\sqrt{d}}\sum_{j=0}^{d-1}\ket{y^{m,\pi}_j}\ket{j},
\end{equation*}
where $\ket{y^{m,\pi}_j}$ are product qudit states and $\{\ket{y^{m,\pi}_j}\}_{j=0}^{d-1}$ is an orthonormal set. On the other hand, if Alice commits to $0$, the post-measurement state given outcome $m$ is:
\begin{equation*}
\ket{\Xi_{\pi, m}^0}=\frac{1}{\sqrt{d}}\sum_{l=0}^{d-1}\om^{\pi(m)l}\ket{\Phi_l}=
\frac{1}{\sqrt{d}}\sum_{j=0}^{d-1}\ket{x^{m,\pi}_j}\ket{j}\,
\end{equation*}
where $\ket{x^{m,\pi}_j}$ is an $AME(3,d)$ state for every $j$, and $\{\ket{x^{m,\pi}_j}\}_{j=0}^{d-1}$ is an orthonormal set. We refer the reader to appendix \ref{apx:sec} for more detailed explanations.

It therefore follows from \cref{thm_bin}  that $p_b\leq\nicefrac{1}{d}$ for $b\in\{0,1\}$. This means that Alice can successfully reveal a different bit than the one she committed to, with probability $\nicefrac{1}{d}$.
\end{proof}

We emphasise that if Alice had not been restricted to separable operations, then her cheating probability would have been $1$, whereas for qutrits our cheating probability is already smaller or equal to $\nicefrac{1}{3}$. Next, we prove that the protocol is perfectly hiding. 
{ 
\begin{lem}\label{lem:bc-hiding}
For any state that Bob creates and sends in the AME(3,d) quantum bit commitment protocol, he does not get any information on the bit Alice commits.
\end{lem}
}
\begin{proof}
Bob can try to cheat in Step 1 of the protocol, by sending a different state that could potentially improve his probability of guessing the committed bit before the opening phase. 
Alice's measurements can be modelled as quantum channels and thus Bob's cheating probability is the probability of successfully distinguishing between Alice's two choices. 

Alice's commitment $b$ is implemented in terms of a measurement channel on the register $\anc$; she chooses a permutation $\pi\in S_d$ at random and then measures in the $\bb_{0,\pi}$ or $\bb_{1,\pi}$ basis respectively. For a given $\pi$, we can write the measurement channels as follows:
\begin{align*}
    \cM^\pi_{0}(\rho) &= \sum_{m=0}^{d-1} (\ket{m}\bra{\widetilde{\pi(m)}}\otimes\1_{\cA\cB})\rho(\ket{\widetilde{\pi(m)}}\bra{m}\otimes\1_{\cA\cB}),\\
    \cM^\pi_{1}(\rho) &= \sum_{m=0}^{d-1} (\ket{m}\bra{\pi(m)}\otimes\1_{\cA\cB})\rho(\ket{\pi(m)}\bra{m}\otimes\1_{\cA\cB}).
\end{align*}
Since $\pi$ is chosen at random, Alice's measurement channels corresponding to a $0$ and $1$ commitment are:
\begin{align*}
    \cM_{0}(\rho)=\frac{1}{d!}\sum_{\pi\in S_d}\cM^\pi_{0}(\rho),\\
    \cM_{1}(\rho)=\frac{1}{d!}\sum_{\pi\in S_d}\cM^\pi_{1}(\rho).
\end{align*}
For any $\rho$, we then have (see appendix \ref{apx:sec} for details):
\begin{equation*}
   \cM_{0}(\rho)=\cM_{1}(\rho).
\end{equation*}
Therefore the post-measurement state reveals no information on Alice's choice to Bob, no matter which state he sends Alice. The protocol is therefore perfectly hiding.
\end{proof}

\section{Conclusion}
While unconditionally secure quantum bit commitment cannot exist, { in this work we present conditions under which one could design a protocol that is binding if the committing party is restricted to separable operations. 
Specifically, we prove that if the shared post-commitment state is either an absolutely maximally entangled state or a product state after tracing out the receiver's system, then the probability of the committer successfully cheating can be bounded by $\nicefrac{1}{d}$, where $d$ is the dimension of the qudits in the communicated state.} We further strengthen our result by presenting an example protocol that is perfectly hiding and that saturates the $\nicefrac{1}{d}$ bound, if the committer is honest during the first phase of the protocol. The proposed AME(3,d) bit commitment protocol provides a foundational example of how honest-binding bit commitment can be designed within this restricted operational framework, leveraging the properties of AME states to ensure security. However in our proposed protocol, Alice can easily defer her commitment, making the protocol honest-binding, but not sum-binding.

Our results present a novel approach to circumventing the no-go theorem for unconditionally secure bit commitment in the quantum setting, emphasizing the role of operational restrictions on the committing party. \Cref{thm_bin} offers a set of conditions that if fulfilled by a protocol will lead to $\nicefrac{1}{d}$-sum-binding and perfectly hiding quantum bit commitment.  While we have established security under the honest-binding criterion,
developing protocols where the restriction to separable operations can also guarantee sum-binding security remains an important challenge which would have strong implications for the implementation of secure bit commitment schemes. 

Furthermore, the security of the proposed protocol scales with the dimension of the communicated qudits, rather than with the number of used qubits as is often the case with other cryptographic protocols, in particular \cite{salvail_quantum_1998}.  
%We note that already by using qutrits we improve the optimal bound for the unconditional setting given in \cite{chaillouxOptimal}. 
The binding condition presented in the latter requires a large amount of qubits (roughly calculated at 7800) with a restriction of acting on maximum one qubit at a time to be close to our bound offered by qutrits, highlighting the qubit-qutrit tradeoff between scaling in numbers and in dimension. %Therefore we expect that our bounds will offer improved security even for dimensions as low as 4. 
Qudits however remain significantly difficult to control and use with current technological resources \cite{qudits,hamiltionia-prx, hifiQutrit22}. A possible implementation of the proposed protocol using qutrits could however already be envisioned by using a ten-qubit entangled state \cite{qutrit-qubit-pra}; however, this would need to be further investigated.

Further work should also include a thorough analysis of the protocol’s robustness under realistic noise conditions. 
A starting point could be considering separable noise channels, for example depolarising channels acting on each qudit of Alice's reduced state. This would not alter our results as the effect of the noise channel would be absorbed into Alice's separable channel, leaving her cheating probability unchanged.
In contrast, correlated noise is destructive and decreases entanglement, and we believe that 
while this might increase Alice's cheating probability in one direction (switching her commitment from $0$ to $1$) it will reduce it in the other direction.
Additionally such noise could lead to Bob rejecting Alice's commitment even when she has been acting honestly. Typically this would be resolved with some mitigation mechanism (e.g. repetition of the protocol), that could impact the cheating strategies. A careful treatment of these noise effects is left for future work.

Another direction for future research is to investigate separable channels over more than two partitions and eventually return to the physically more realistic restriction of LOCC operations mentioned in the introduction. Given that the latter do not have a closed form expression \cite{chitambar_locc,watrous_theory_2018}, we anticipate the mathematical analysis to be more challenging. We expect, however, the inability of Alice to create entanglement to remain the backbone of the security proof.

Finally, we believe that the insights from this work can be extended to other two-party cryptographic primitives, such as oblivious transfer and secure multiparty computation, where similar operational restrictions might enable new secure protocols.

\section{Acknowledgements}
All authors would like to particularly thank Jamie Sikora for insightful discussions throughout the project. MR also acknowledges discussions with Andreas Winter on an alternative swap-based protocol. Finally, AP would like to thank Louis Salvail for introducing the original problem to her while on an academic stay at the University of Montreal and for discussing potential solutions. ZC and AP acknowledge support from the Hector Fellow Academy, the Emmy Noether grant from the DFG (No 41829458) and the Berlin Quantum Alliance. MR is supported by MUR via project 816000-2022-SQUID - CUP F83C22002390007 (Young Researchers). 

\bibliography{bibliography.bib}
\newpage
\onecolumngrid
\appendix
\section{Deriving the general cheating bound}\label{apx:gen}
\setlength{\parindent}{0pt}
We derive \eqref{eq:switch_schmidt_0} and \eqref{eq:switch_schmidt_1} by 
plugging \eqref{eq:schmidt_0} and \eqref{eq:schmidt_1} into \eqref{eq:fidelity_overlapsum}; indeed,
\begin{align*}
    p_0&=\sum_j|\bra{\Psi_{1}}(K_j\otimes\1_\cB)\ket{\Psi_0}|^2=\sum_j\left|\left(\sum_{i=0}^{M-1}\sqrt{\lambda_i}\bra{y_i}\bra{b_i}\right)(K_j\otimes\1_\cB)\left(\sum_{l=0}^{M-1}\sqrt{\lambda_l}\ket{x_l}\ket{b_l}\right)\right|^2\\
    &=\sum_j\left|\sum_{i=0}^{M-1}\lambda_i\bra{y_i}K_j\ket{x_i}\right|^2.
\end{align*}
Similarly we find
    $p_1=\sum_j\left|\sum_{i=0}^{M-1}\lambda_i\bra{x_i}K_j\ket{y_i}\right|^2.$

We now present detailed computations of \eqref{eq:alice_switch_bound_0} and \eqref{eq:alice_switch_bound_1}. We start with the latter. Using \eqref{eq:product_state} and \eqref{eq:entangled_states} we find
\begin{align*}
    p_1&=\sum_j\left|\sum_{i=0}^{M-1} \lambda_i\bra{x_i}K_j\ket{y_i}\right|^2=\sum_j\left|\sum_{i=0}^{M-1}\lambda_i\left(\sum_{k=0}^{N_2-1}\frac{1}{\sqrt{N_2}}\braket{e_{ik}}{v_{ji}}\braket{f_{ik}}{w_{ji}}
    \right)\right|^2\\
    &=\sum_j\frac{1}{N_2}\left|\sum_{i=0}^{M-1}\sum_{k=0}^{N_2-1}\sqrt{\lambda_i}\braket{e_{ik}}{v_{ji}}\sqrt{\lambda_i}\braket{f_{ik}}{w_{ji}}\right|^2\\
    &\leq\frac{1}{N_2}\sum_j\left(\sum_{i=0}^{M-1}\sum_{k=0}^{N_2-1}\left|\sqrt{\lambda_i}\braket{e_{ik}}{v_{ji}}\right|^2\right)\left(\sum_{s=0}^{M-1}\sum_{t=0}^{N_2-1}\left|\sqrt{\lambda_s}\braket{f_{st}}{w_{js}}\right|^2\right)\\
    &=\frac{1}{N_2}\sum_j\left(\sum_{i=0}^{M-1}\lambda_i\sum_{k=0}^{N_2-1}\left|\braket{e_{ik}}{v_{ji}}\right|^2\right)\left(\sum_{s=0}^{M-1}\lambda_s\sum_{t=0}^{N_2-1}\left|\braket{f_{st}}{w_{js}}\right|^2\right)\\
    &=\frac{1}{N_2}\sum_j\left(\sum_{i=0}^{M-1}\lambda_i\braket{v_{ji}}{v_{ji}}\right)\left(\sum_{s=0}^{M-1}\lambda_s\braket{w_{js}}{w_{js}}\right)\\
&=\frac{1}{N_2}\sum_j\sum_{i,s=0}^{M-1}\lambda_i\lambda_s\bra{v_i}K_{j1}\da K_{j1}\ket{v_i}\bra{w_s}K_{j2}\da K_{j2}\ket{w_s}\\
    &=\frac{1}{N_2}\Tr\left(\sum_j (K_{j1}\otimes K_{j2})\left(\sum_{i=0}^{M-1}\lambda_i\ketbra{v_i}\otimes\sum_{s=0}^{M-1}
    \lambda_s\ketbra{w_s}\right)(K_{j1}\otimes K_{j2})\da\right)\\ &=\frac{1}{N_2}\Tr\left(\cN\left(\sum_{i=0}^{M-1}\lambda_i\ketbra{v_i}\otimes\sum_{s=0}^{M-1}
    \lambda_s\ketbra{w_s}\right)\right)=\frac{1}{N_2}.
\end{align*}
The upper bound in the second line follows from using the Cauchy-Schwarz inequality on the double sum, treated as the sum of all the terms indexed by $i$ and $k$. The last equality holds because $\cN$ is trace-preserving. 

\allowdisplaybreaks
The case for $b=0$ is a bit more convoluted. Again we use \eqref{eq:product_state} and \eqref{eq:entangled_states}, and we denote by $\lm$ the maximum eigenvalue of $\Phi_0$ given in \eqref{eq:schmidt_0}, and, by $\Pi_K$ (where $K\in\mathbb{N}$) a projector onto a subspace of dimension of $K$.
\begin{align*}
    p_0&=\sum_j\left|\sum_{i=0}^{M-1}\lambda_i\bra{y_i}K_j\ket{x_i}\right|^2
    =\sum_j\left|\sum_{i=0}^{M-1}\sum_{k=0}^{N_2-1}\lambda_i\bra{v_i}\bra{w_i}(K_{j1}\otimes K_{j2})\frac{\ket{e_{ik}}\ket{f_{ik}}}{\sqrt{N_2}}\right|^2\\
    &=\frac{1}{N_2}\sum_j\left|\sum_{i=0}^{M-1}\sum_{k=0}^{N_2-1}\sqrt{\lambda_i}\bra{v_i}K_{j1}\ket{e_{ik}}\sqrt{\lambda_i}\bra{w_i}K_{j2}\ket{f_{ik}}\right|^2\\
    &\leq\frac{1}{N_2}\sum_j\left(\sum_{i=0}^{M-1}\sum_{k=0}^{N_2-1}|\sqrt{\lambda_i}\bra{v_i}K_{j1}\ket{e_{ik}}|^2\sum_{s=0}^{M-1}\sum_{t=0}^{N_2-1}|\sqrt{\lambda_s}\bra{w_s}K_{j2}\ket{f_{st}}|^2\right)\\
     &=\frac{1}{N_2}\sum_j\left(\sum_{i=0}^{M-1}\sum_{k=0}^{N_2-1}\lambda_i\bra{v_i}K_{j1}\ket{e_{ik}}\bra{e_{ik}}K_{j1}\da\ket{v_i}\sum_{s=0}^{M-1}\sum_{t=0}^{N_2-1}\lambda_s\bra{w_s}K_{j2}\ket{f_{st}}\bra{f_{st}}K_{j2}\da\ket{w_s}\right)\\
    &=\frac{1}{N_2}\sum_j\left(\sum_{i=0}^{M-1}\lambda_i\bra{v_i}K_{j1}\underbrace{\left(\sum_{k=0}^{N_2-1}\ket{e_{ik}}\bra{e_{ik}}\right)}_{\Pi_{N_2}}K_{j1}\da\ket{v_i}\sum_{s=0}^{M-1}\lambda_s\bra{w_s}K_{j2}\underbrace{\left(\sum_{t=0}^{N_2-1}\ket{f_{st}}\bra{f_{st}}\right)}_{\1_{N_2}}K_{j2}\da\ket{w_s}\right)\\
    &=\frac{1}{N_2}\sum_j\left(\sum_{i=0}^{M-1}\lambda_i\bra{v_i}K_{j1}\Pi_{N_2}K_{j1}\da\ket{v_i}\sum_{s=0}^{M-1}\lambda_s \bra{w_s}K_{j2}\1_{N_2}K_{j2}\da\ket{w_s}\right)\\
    &=\frac{1}{N_2}\sum_j\left(\Tr\left(\sum_{i=0}^{M-1}\lambda_i\ket{v_i}\bra{v_i}K_{j1}\Pi_{N_2}K_{j1}\da\right)\Tr\left(\sum_{s=0}^{M-1}\lambda_s \ket{w_s}\bra{w_s}K_{j2}\1_{N_2}K_{j2}\da\right)\right)\\
    &\leq\frac{\lm^2}{N_2}\sum_j\left(\Tr\left(\sum_{i=0}^{M-1}\ket{v_i}\bra{v_i}K_{j1}\Pi_{N_2}K_{j1}\da\right)\Tr\left(\sum_{s=0}^{M-1} \ket{w_s}\bra{w_s}K_{j2}\1_{N_2}K_{j2}\da\right)\right)\\
    &=\frac{\lm^2}{N_2}\sum_j\left(\Tr\left(\Pi_{M}K_{j1}\Pi_{N_2}K_{j1}\da\right)\Tr\left(\Pi_{M}K_{j2}\1_{N_2}K_{j2}\da\right)\right)\\
    &=\frac{\lm^2}{N_2}\sum_j\left(\Tr\left(K_{j1}\Pi_{N_2}K_{j1}\da\Pi_{M}\right)\Tr\left(K_{j2}\1_{N_2}K_{j2}\da\Pi_{M}\right)\right)\\  
    &=\frac{\lm^2}{N_2}\Tr\left(\sum_j(K_{j1}\otimes K_{j2})(\Pi_{N_2}\otimes\1_{N_2})(K_{j1}\otimes K_{j2})\da(\Pi_{M}\otimes\Pi_M)\right)\\
    &=\frac{\lm^2}{N_2}\Tr\left(\cN(\Pi_{N_2}\otimes\1_{N_2})(\Pi_{M}\otimes\Pi_M)\right)
    \leq\frac{\lm^2}{N_2}\Tr(\cN(\Pi_{N_2}\otimes\1_{N_2}))\\
    &=\frac{\lm^2}{N_2}\Tr(\Pi_{N_2}\otimes\1_{N_2})=\frac{\lm^2}{N_2}N_2^2=\lm^2N_2.
\end{align*}
 The first inequality again follows the Cauchy-Schwarz inequality, while the last inequality holds because $\Pi_{N_2}\otimes\1_{N_2}$ is a positive operator and $\cN$ a completely positive map, therefore mapping positive operators to positive operators.

\section{Security proof of the $AME(3,d)$-QBC protocol}\label{apx:sec}
\setlength{\parindent}{0pt}
\subsubsection{The honest case}
We first give detailed calculations to show that the protocol indeed allows for bit commitment if both parties are honest. We need to show that the states when Alice commits to $0$ or $1$ are indistinguishable to Bob. We show that Bob's reduced state is always the maximally mixed state. We first properly derive the post-commitment states. When Alice commits to $b=1$, we have
\begin{align*}
&(\ket{m}\bra{\pi(m)}\otimes \1)\ket{\Xi}=(\ket{m}\bra{\pi(m)}\otimes \1)\left(\frac{1}{\sqrt{d}}\sum_{l=0}^{d-1}\ket{l}\ket{\Phi_l}\right)= \frac{1}{\sqrt{d}}\sum_{l=0}^{d-1}\ket{m}\delta_{\pi(m)l}\ket{\Phi_l}\\
=&\frac{1}{\sqrt{d}}\ket{m}_\anc\ket{\Phi_{\pi(m)}}_{\cA\cB}.
\end{align*}
And therefore
$ \frac{(\ket{m}\bra{\pi(m)}\otimes \1)\ket{\Xi}}{\sqrt{\bra{\Xi}(\ket{\pi(m)}\bra{m}\otimes \1)(\ket{m}\bra{\pi(m)}\otimes \1)\ket{\Xi}}}=\ket{m}_\anc\ket{\Phi_{\pi(m)}}_{\cA\cB}.
$
\newline\newline
If she commits to $b=0$, we have 
\begin{align*}
&(\ket{m}\bra{\widetilde{\pi(m)}}\otimes \1)\ket{\Xi}=(\ket{m}\bra{\widetilde{\pi(m)}}\otimes \1)\left(\frac{1}{\sqrt{d}}\sum_{l=0}^{d-1}\ket{l}\ket{\Phi_l}\right)
 =\frac{1}{\sqrt{d}}\sum_{l=0}^{d-1}\ket{m}\braket{\widetilde{\pi(m)}}{l}\ket{\Phi_l}\\
  =&\frac{1}{\sqrt{d}}\sum_{l=0}^{d-1}\left(\frac{1}{\sqrt{d}}\sum_{k=0}^{d-1}\om^{\pi(m)k}\ket{m}\braket{k}{l}\right)\ket{\Phi_l}
  = \frac{1}{\sqrt{d}}\sum_{l=0}^{d-1}\left(\frac{1}{\sqrt{d}}\sum_{k=0}^{d-1}\om^{\pi(m)k}\delta_{kl}\ket{m}\right)\ket{\Phi_l}\\
    =&\frac{1}{d}\ket{m}_{\anc}\sum_{l=0}^{d-1}\om^{\pi(m)l}\ket{\Phi_l}_{\cA\cB}.
\end{align*}
Since $\om$ is a  $d$-th root of unity, it holds that
             $\sum_{i=0}^{d-1}\om^i=0,$
     which in turn implies that for all $j,k\in\{0,\dots,d-1\}$, we have:
\[  \left(\frac{1}{d}\sum_{k=0}^{d-1}\om^{-\pi(m)k}\bra{\Phi_{k}}\right)  \left(\frac{1}{d}\sum_{l=0}^{d-1}\om^{\pi(m)l}\ket{\Phi_{l}}\right)
   =\frac{1}{d^2}\sum_{l,k=0}^{d-1}\om^{\pi(m)(l-k)}\delta_{kl}=\frac{1}{d}\]
and therefore:
\begin{equation*}
\frac{(\ket{m}\bra{\widetilde{\pi(m)} }\otimes \1)\ket{\Xi_\pi}}{\sqrt{\bra{\Xi_\pi}(\ket{\widetilde{\pi(m)}}\otimes \1)(\bra{\widetilde{\pi(m)}}\otimes \1)\ket{\Xi_\pi}}}=\ket{m}_\anc\left(\frac{1}{\sqrt{d}}\sum_{l=0}^{d-1}\om^{m\pi(l)}\ket{\Phi_l}_{\cA\cB}\right).
\end{equation*}
Since the state of the ancilla register $\ket{m}_{\anc}$ is fixed after the commitment and therefore contains classical information, it can be dropped for the remainder of the analysis. We will therefore consider the states $\ket{\Xi_{\pi, m}^0}=\frac{1}{\sqrt{d}}\sum_{l=0}^{d-1}\om^{m\pi(l)}\ket{\Phi_l}$ and $\ket{\Xi_{\pi, m}^1}=\ket{\Phi_{\pi(m)}}$.

We also want Bob's reduced states to be indistinguishable and maximally mixed. Bob's reduced state when Alice commits $1$ is:
\begin{align*}
    &\Tr[\cA](\ketbra{\Xi_{\pi, m}^1})
    =\Tr[\cA](\ketbra{\Phi_{\pi(m)}})\\
    =&\Tr[\cA]\left(\frac{1}{\sqrt{d}}\sum_{l=0}^{d-1}\om^{\pi(m)l}\ket{lll}\ket{l+\pi(m)}\right)\left(\frac{1}{\sqrt{d}}\sum_{j=0}^{d-1}\om^{-\pi(m)j}\bra{jjj}\bra{j+\pi(m)}\right)\\
    =&\frac{1}{d}\sum_{j,l=0}^{d-1}\om^{\pi(m)(l-j)}\delta_{lj}\ket{l+\pi(m)}\bra{j+\pi(m)}
    =\frac{1}{d}\sum_{j=0}^{d-1}\ket{j+\pi(m)}\bra{j+\pi(m)}=\frac{1}{d}\sum_{j'=0}^{d-1}\ket{j'}\bra{j'}
    =\frac{\1_\cB}{d}.
\end{align*}
Similarly when she commits $0$, Bob's reduced state is:
\begin{align*}
    &\Tr[\cA](\ketbra{\Xi_{\pi, m}^0})
    =\Tr[\cA]\left[\left(\frac{1}{\sqrt{d}}\sum_{k=0}^{d-1}\om^{\pi(m)k}\ket{\Phi_k}\right)\left(\frac{1}{\sqrt{d}}\sum_{j=0}^{d-1}\om^{-\pi(m)j}\bra{\Phi_j}\right)\right]\\
    =&\frac{1}{d}\sum_{j,k=0}^{d-1}\om^{\pi(m)(k-j)}\Tr[\cA]\left[\ket{\Phi_k}\bra{\Phi_j}\right]\\
    =&\frac{1}{d}\sum_{j,k=0}^{d-1}\om^{\pi(m)(k-j)}\Tr[\cA]\left[\left(\frac{1}{\sqrt{d}}\sum_{s=0}^{d-1}\om^{ks}\ket{sss}\ket{s+k}\right)\left(\frac{1}{\sqrt{d}}\sum_{t=0}^{d-1}\om^{-jt}\bra{ttt}\bra{t+j}\right)\right]\\
    =&\frac{1}{d^2}\sum_{j,k=0}^{d-1}\om^{\pi(m)(k-j)}\sum_{s,t=0}^{d-1}\om^{ks-jt}\delta_{st}\ket{s+k}\bra{t+j}
    =\frac{1}{d^2}\sum_{j,k=0}^{d-1}\om^{\pi(m)(k-j)}\sum_{s=0}^{d-1}\om^{s(k-j)}\ket{s+k}\bra{s+j}.
\end{align*}
We relabel the sum with $j'=s+j$ and $k'=s+k$, and note that $k'-j'=k-j$. Summing over roots of unity again yields:
    \begin{align*}
    &\frac{1}{d^2}\sum_{j,k=0}^{d-1}\om^{\pi(m)(k-j)}\sum_{s=0}^{d-1}\om^{s(k-j)}\ket{s+k}\bra{s+j}
    =\frac{1}{d^2}\sum_{j',k'=0}^{d-1}\om^{\pi(m)(k'-j')}\left(\sum_{s=0}^{d-1}\om^{s(k'-j')}\right)\ket{k'}\bra{j'}\\
    =&\frac{1}{d^2}\sum_{j',k'=0}^{d-1}\om^{\pi(m)(k'-j')}d\delta_{j'k'}\ket{k'}\bra{j'}
    =\frac{1}{d}\sum_{j'=0}^{d-1}\ketbra{j'}
    =\frac{\1_\cB}{d}.
\end{align*}
The states are therefore indistinguishable to Bob in the honest case.\\

We also want to prove that the sets $\{\ket{\Xi_{\pi, j}^0\}}_{j=0}^{d-1}$ and $\{\ket{\Xi_{\pi, j}^1}\}_{j=0}^{d-1}$ are orthonormal sets. The definition \eqref{def:phi_l} of $\Phi_l$ implies that $\{\ket{\Xi_{\pi, j}^1}\}_{j=0}^{d-1}$ is an orthonormal set. For $\{\ket{\Xi_{\pi, j}^0\}}_{j=0}^{d-1}$ we have
\begin{align*}
    &\braket{\Xi_{\pi, l}^0}{\Xi_{\pi, m}^0}=\left(\frac{1}{\sqrt{d}}\sum_{k=0}^{d-1}\om^{-\pi(l)k}\bra{\Phi_k}\right)\left(\frac{1}{\sqrt{d}}\sum_{j=0}^{d-1}\om^{\pi(m)j}\ket{\Phi_j}\right)\\
    =&\frac{1}{d}\sum_{k,j=0}^{d-1}\om^{\pi(m)j-\pi(l)k}\braket{\Phi_k}{\Phi_j}=\frac{1}{d}\sum_{k,j=0}^{d-1}\om^{\pi(m)j-\pi(l)k}\delta_{kj}
    =\frac{1}{d}\sum_{k=0}^{d-1}\om^{(\pi(m)-\pi(l))k}=\delta_{lm}.
\end{align*}

\subsection{Honest-binding}

When Alice commits to $1$, the shared state is:
\begin{align*}
&\ket{\Xi_{\pi, m}^1}=\ket{\Phi_{\pi(m)}}=\frac{1}{\sqrt{d}}\sum_{j=0}^{d-1}\om^{j\pi(m)}\ket{jjj}\ket{j+\pi(m)}\\
=&\frac{1}{\sqrt{d}}\sum_{j=0}^{d-1}\underbrace{\om^{(j-\pi(m))\pi(m)}\ket{(j-\pi(m))(j-\pi(m))(j-\pi(m))}}_{\ket{y^{m,\pi}_j}}\ket{j}\\
=&\frac{1}{\sqrt{d}}\sum_{j=0}^{d-1}\ket{y^{m,\pi}_j}\ket{j},
\end{align*}
where $\ket{y^{m,\pi}_j}$ are product qudit states that form an orthonormal set.\\

The post-commitment state after Alice commits to $0$ is:
\begin{align*}
&\ket{\Xi_{\pi, m}^0}=\frac{1}{\sqrt{d}}\sum_{l=0}^{d-1}\om^{\pi(m)l}\ket{\Phi_l}
=\frac{1}{\sqrt{d}}\sum_{l=0}^{d-1}\om^{\pi(m)l}\sum_{j=0}^{d-1}\frac{\om^{lj}}{\sqrt{d}}\ket{jjj}\ket{j+l}
=\frac{1}{d}\sum_{j=0}^{d-1}\sum_{l=0}^{d-1}\om^{\pi(m)l+jl}\ket{jjj}\ket{j+l}\\
=&\frac{1}{\sqrt{d}}\sum_{j=0}^{d-1}\left(\underbrace{\frac{1}{\sqrt{d}}\sum_{l=0}^{d-1}\om^{\pi(m)l+(j-l)l}\ket{(j-l)(j-l)(j-l)}}_{\ket{x^{m,\pi}_j}}\right)\ket{j}
=\frac{1}{\sqrt{d}}\sum_{j=0}^{d-1}\ket{x^{m,\pi}_j}\ket{j}\label{eq:post_measurement_ame},
\end{align*}
One can easily show that states of the form $\frac{1}{\sqrt{d}}\sum_{j=0}^{d-1}\om^{k_j}\ket{jjj}$ are AME, and therefore $\ket{x^{m,\pi}_j}$ are $AME(3,d)$ states that form an orthonormal set. \\

\subsection{Hiding}
We will now prove that for any $\rho$, it holds that $\cM_{0}(\rho)=\cM_{1}(\rho)$ and therefore the commitment does not reveal any information to Bob. To do this we will use the following fact: If $S_d$ is the group of all permutations over $d$ elements, for all $j,k\in \{1,\dots,d\}$ we have $|\{\sigma\in S_d : \sigma(j)=k\}|=(d-1)!$. It then follows:

\begin{align*}
&\cM_{0}(\rho)=\frac{1}{d!}\sum_{\pi\in S_d}\sum_{m=0}^{d-1} (\ket{m}\bra{\widetilde{\pi(m)}}\otimes\1_{\cB})\rho(\ket{\widetilde{\pi(m)}}\bra{m}\otimes\1_{\cB})\\
=&\frac{1}{d!}\sum_{m=0}^{d-1} \sum_{\pi\in S_d}\sum_{k,l=0}^{d-1}\frac{1}{d}\om^{\pi(m)(k-l)}(\ket{m}\bra{k}\otimes\1_{\cB})\rho(\ket{l}\bra{m}\otimes\1_{\cB})\\
=&\frac{1}{dd!}\sum_{m=0}^{d-1} \sum_{k,l=0}^{d-1}\sum_{\pi\in S_d}\om^{\pi(m)(k-l)}(\ket{m}\bra{k}\otimes\1_{\cB})\rho(\ket{l}\bra{m}\otimes\1_{\cB})\\
=&\frac{1}{dd!}\sum_{m=0}^{d-1} \sum_{k,l=0}^{d-1}(d-1)!\sum_{q=0}^{d-1}\om^{q(k-l)}(\ket{m}\bra{k}\otimes\1_{\cB})\rho(\ket{l}\bra{m}\otimes\1_{\cB})\\
=&\frac{1}{d^2}\sum_{m=0}^{d-1} \sum_{k,l=0}^{d-1}d\delta_{kl}(\ket{m}\bra{k}\otimes\1_{\cB})\rho(\ket{l}\bra{m}\otimes\1_{\cB})\\
=&\frac{1}{d}\sum_{m=0}^{d-1} \sum_{k=0}^{d-1}(\ket{m}\bra{k}\otimes\1_{\cB})\rho(\ket{k}\bra{m}\otimes\1_{\cB}).\\
\end{align*}
Similarly we compute: 
\begin{align*}
&\cM_{1}(\rho)=\frac{1}{d!}\sum_{\pi\in S_d}\sum_{m=0}^{d-1} (\ket{m}\bra{\pi(m)}\otimes\1_{\cB})\rho(\ket{\pi(m)}\bra{m}\otimes\1_{\cB})\\
=&\frac{1}{d!}\sum_{m=0}^{d-1}\sum_{\pi\in S_d}(\ket{m}\bra{\pi(m)}\otimes\1_{\cB})\rho(\ket{\pi(m)}\bra{m}\otimes\1_{\cB})\\
=&\frac{1}{d!}\sum_{m=0}^{d-1} (d-1)!\sum_{k=0}^{d-1}(\ket{m}\bra{k}\otimes\1_{\cB})\rho(\ket{k}\bra{m}\otimes\1_{\cB})\\
=&\frac{1}{d}\sum_{m=0}^{d-1}\sum_{k=0}^{d-1}(\ket{m}\bra{k}\otimes\1_{\cB})\rho(\ket{k}\bra{m}\otimes\1_{\cB}).\\
\end{align*}

\end{document}